\numberwithin{equation}{section}
\DeclareMathOperator\sP{P}   
\DeclareMathOperator\diag{diag}
\newcommand{\cvgas}{\xrightarrow{\text{\upshape\tiny a.s.}}}   
\newcommand{\integers}{\mathbb{Z}}
\newcommand{\argmax}{\mathop{\rm arg\,max}}
\def\imod#1{\allowbreak\mkern10mu({\operator@font mod}\,\,#1)}
\newtheorem{thm}{Theorem}[section]
\newtheorem{prop}{Proposition}[section]
\newtheorem{lem}{Lemma}[section]
\newtheorem{cor}{Corollary}[section]
\theoremstyle{definition}
\newtheorem{remark}{Remark}[section]
\newtheorem{dfn}{Definition}[section]
\newtheorem{example}{Example}[section]
\def\appendix{\renewcommand{\thesection}{\thechapter.\Alph{section}}
	\@ifstar\unnumberedappendix\numberedappendix}
\def\numberedappendix{\@ifnextchar[
  \numberedappendixwithtwoarguments\numberedappendixwithoneargument}
\def\unnumberedappendix{\@ifnextchar[
  \unnumberedappendixwithtwoarguments\unnumberedappendixwithoneargument}
\def\numberedappendixwithoneargument#1{\numberedappendixwithtwoarguments[#1]{#1}}
\def\unnumberedappendixwithoneargument#1{\unnumberedappendixwithtwoarguments[#1]{#1}}
\def\numberedappendixwithtwoarguments[#1]#2{%
  \ifhmode\par\fi
  \removelastskip
  \vskip 2ex\goodbreak
  \refstepcounter{section}%
  \begingroup
  \noindent\leavevmode\Large\bfseries\raggedright 
  Appendix\ \thesection\quad#2\par
  \endgroup
  \vskip 1ex\nobreak
  \addcontentsline{toc}{section}{%
    \protect\numberline{\thesection}%
    #1}%
  }
\def\unnumberedappendixwithtwoarguments[#1]#2{%
  \ifhmode\par\fi
  \removelastskip
  \vskip 2ex\goodbreak
  \begingroup
  \noindent\leavevmode\Large\bfseries\raggedright 
  #2\par
  \endgroup
  \vskip 1ex\nobreak
  \addcontentsline{toc}{section}{%
    #1}%
  }
\begin{document}
\title{Full Reconstruction of Non-Stationary Strand-Symmetric Models on Rooted Phylogenies}
\author{Benjamin D Kaehler}
\affil{Research School of Biology, Australian National University, Canberra, Australian Capital Territory, Australia}
\maketitle
\begin{abstract}
Understanding the evolutionary relationship among species is of fundamental importance to the biological sciences. The location of the root in any phylogenetic tree is critical as it gives an order to evolutionary events. None of the popular models of nucleotide evolution used in likelihood or Bayesian methods are able to infer the location of the root without exogenous information. It is known that the most general Markov models of nucleotide substitution can also not identify the location of the root or be fitted to multiple sequence alignments with less than three sequences. We prove that the location of the root and the full model can be identified and statistically consistently estimated for a non-stationary, strand-symmetric substitution model given a multiple sequence alignment with two or more sequences. We also generalise earlier work to provide a practical means of overcoming the computationally intractable problem of labelling hidden states in a phylogenetic model.
\end{abstract}

\begin{keywords}
Identifiability; Phylogenetic inference; Markov process; Maximum likelihood
\end{keywords}

\section{Introduction}

The location of the root in a molecular phylogeny has contributed to criminal convictions \citep{gonzalez2013molecular}, been used to understand the source and epidemiology of human viruses \citep{podsiadlo2013molecular}, determined how biodiversity conservation resources were distributed \citep{faith2006phylogenetic}, been used to develop potential HIV vaccines \citep{nickle2003consensus}, and played an important role in our understanding of the tree of life \citep{murphy2007using}. While an unrooted phylogenetic tree can be used to infer relatedness between species, without the location of the root we know nothing of the order of evolutionary events. It might then be surprising that none of the commonly used Markov models of nucleotide or codon substitution can be used to identify the location of the root without incorporating information that is exogenous to the model. The class of Markov models to which we refer will be made precise in the next section.

ModelTest is one of the most popular pieces of software for selecting phylogenetic models of character substitution \citep{posada2008jmodeltest}. It allows users to determine which of 88 time-reversible (hereafter {\it reversible}) substitution models best fits their data. By definition, for a reversible model the location of the root in a phylogeny cannot change the probability distribution of the observed data, the column frequencies. Some software \citep{knight:2007:00} allows users to fit non-stationary models of character substitution such as that of \citet{barry1987statistical}. Unfortunately, the theoretical results that exist around fully general models \citep{chang:1996:00} explicitly state that for such models the location of the root is not statistically identifiable, that one cannot use such models to ask where on an edge a root resides, and that it is possible to reformulate the model so that any node is the root.

In practice, the location of the root is usually determined by declaring that a specific taxon in a phylogeny is an {\it outgroup} or by making a {\it molecular clock} assumption \citep{felsenstein2004inferring}. The first method assumes that the location of the root is already known, that it is on the edge connected to the outgroup. The second method comes in varying degrees of complexity. In its most simple form it assumes that the tree is {\it ultrametric}, that the genetic distance from the root node to each tip is identical. In more sophisticated Bayesian approaches the location of the root enters the calculation as part of the prior distribution of tree topologies and branch lengths, so that the tree is not necessarily ultrametric but that in some sense the evolutionary time from the root of the tree to the tips is the same along every lineage \citep{drummond2007beast}.

A third method is to use a substitution model that is able to identify the location of the root. Such a model must not be reversible, but using a model that is not reversible does not automatically ensure that the model is able to identify the root. This statement is easily justified using the findings in \citet{chang:1996:00}, where a model that is non-stationary, so also non-reversible, is not able to recover the root. That a non-reversible model might not be even theoretically able to discover the root seems to have been missed by some authors.

\citet{yang1995use} fitted a non-stationary model to rooted topologies of real data using maximum likelihood and found that the location of the root of the tree had a significant effect on likelihood estimates. This is useful empirical evidence but the authors made no attempt to prove that their model is identifiable.

\citet{huelsenbeck2002inferring} fitted a non-reversible but stationary model to real and simulated data and found that while the outgroup and molecular clock methods were able to recover the location of the root in many cases, their model was not. Again, they made no effort to show that their model is theoretically capable of recovering the location of the root, so the poor performance of their model is not necessarily a reflection on the ability of all substitution models to recover the location of the root.

\citet{yap2005rooting} systematically reproduced the results in \citet{yang1995use} and \citet{huelsenbeck2002inferring} and, with some small discrepancies with the earlier studies, found again that a non-stationary model was able to make statistically significant inferences about the location of the root, but that a non-reversible model did little better than a reversible model. This also was empirical research that left unanswered questions about whether any of the models were theoretically able to identify the location of the root.

The contribution of the present work is to constructively prove that there is a non-stationary substitution process that identifies the location of the root that can be statistically consistently estimated from data. Indeed, the model is shown to be consistently estimable for two taxa. This is not possible for general non-stationary processes \citep{chang:1996:00,bonhomme2014nonparametric}, so we make the additional assumption that the process is {\it strand-symmetric}; that the process of evolution is identical on the sense and antisense strands of DNA. The conditions of the proof ensure that the process is non-stationary, and we show that a non-reversible, stationary model is not identifiable so cannot be consistently estimated. This observation sheds some light on the success of non-stationary processes and the failure of non-reversible, stationary processes at detecting the root in the literature.

Much has been written about the biological mechanisms that result in nucleotide substitution processes being strand asymmetric and there is now substantial empirical evidence to support strand asymmetry's existence in nature. \citet{touchon2008gc} provide a good review of the subject. Strand asymmetry seems to be a localised phenomenon, existing on the scale of genes rather than genomes, and appears to be common in prokaryote and organelle genomes but not in eukaryotes. Nucleotide compositional asymmetry is the most common measure used for statistical inference. Under very loose assumptions \citep{lobry1995properties,lobry1999evolution} a strand-symmetric process should result in the proportions of As and Ts being equal and the proportions of Gs and Cs being equal on a single strand. Strand asymmetry can also be inferred by directly comparing estimated rates of nucleotide substitution, although most of the evidence seems to come from counting substitutions in ancestral state reconstructions based on maximum parsimony \citep[eg.][]{wu1987inequality,bulmer1991strand,francino2000strand}.

Strand-symmetric models have been used in a maximum likelihood context, although rarely for the purpose of establishing whether strand symmetry is a reasonable assumption. \citet{yap2004identification} comment that the reversible models that they fitted to real data seemed to exhibit strand symmetry. \citet{squartini2008quantifying} fitted a continuous-time, non-stationary, strand-symmetric model on a known, rooted, four-taxon topology to two genome-scale data sets. They comment that their model is not identifiable on the edges incident to the root, but that it is identifiable on the other two edges. This is not strictly correct. As stated in \citet[Remark~2]{chang:1996:00}, the labelling of states at the internal nodes is not automatically identifiable even for a continuous-time model. Also, as the mapping from the discrete-time process considered in \citet{chang:1996:00} to the continuous-time process fitted in \citet{squartini2008quantifying} is not always unique \citep[Section~2.3]{higham:2008:00}, continuous-time models are not identifiable under the results in \citet{chang:1996:00} without further constraints.

Strand-symmetric substitution models have also been approached from a theoretical perspective \citep{casanellas2005strand, jarvis2013matrix} in the context of reversible models. \citet{jarvis2013matrix} make the observation that strand-symmetric models enjoy the property of {\it closure}; that a model which is strand-symmetric on two adjacent phylogenetic branches is strand-symmetric across the two branches as well. It is straightforward to show that this property extends to the non-stationary processes that we consider here.

The proofs in this work mirror those in \citet{chang:1996:00}, but apart from adding the assumption of strand symmetry and removing the assumption of an unrooted tree, we also relax an important assumption that \citet{chang:1996:00} makes about the structure of the model. As noted in \citet{zou2011the-parameters} and addressed in \citet{mossel2006learning} the model of \citet{barry1987statistical} is only identifiable up to an arbitrary relabelling of states at internal nodes of the phylogeny. \citet{chang:1996:00} addresses this problem by assuming that the transition probability matrix for every edge in the topology is {\it reconstructible from rows}. As we shall demonstrate, this assumption can be restrictive in practice, so, motivated by Remark~4 in \citet{chang:1996:00}, we partially relax it.

In Section~\ref{notation} we briefly introduce the necessary notation and theoretical context. Section~\ref{identifiability_of_the_full_model} contains the main results of the paper, where we prove that the full topology and parameters of a discrete-time, non-stationary, strand-symmetric Markov model can be recovered from the pairwise joint probability distributions of states between extant taxa. In this section we also extend the result to continuous-time models which are used more commonly in practice. Section~\ref{consistency_of_the_full_model} proves that the results in Section~\ref{identifiability_of_the_full_model} provide the necessary basis for consistent statistical estimation of the models in question for multiple sequence alignments of increasing length. Section~\ref{conculding_remarks} gives some concluding remarks.

\section{Markov Models on Trees: Definitions and Notation}\label{notation}

We consider a finite set of extant taxa $T$ whose phylogenetic history we wish to infer. The history is modelled as a tree which is a set of nodes $S$ which represent  extant or ancestral species and undirected edges $E$ which represent genetic descent. The edges are a set of unordered pairs of nodes, so if $\{r,s\}\in E$, an edge exists between nodes $r$ and $s$. The nodes consist of the {\it terminal nodes}, which are just $T$, and the {\it internal nodes} $N$, so that $N=S\setminus T$. The internal nodes represent ancestral taxa at branching points.

The {\it degree} of a node is the number of edges incident on that node. We say that a tree is {\it unrooted} if it contains no internal nodes with degree less than three. A tree is {\it rooted} if it contains a single node with degree two, which we call the {\it root}. We do not consider trees with more than one internal node of degree two.

The model of character substitution is properly considered a probabilistic graphical model defined on the tree. That is, we associate a random variable $X_s\in\mathcal{C}$ with each node $s\in S$ so that $\{X_s\}_{s\in S}$ represents the history of a single column in a multiple sequence alignment. Each $X_s$ is independent of all other $X_r$, conditional on the states at the nodes neighbouring $s$. As we will focus on the assumption of strand symmetry, we will assume that $\mathcal{C}=\{\text{A}, \text{C}, \text{G}, \text{T}\}$. We also assume that each column in the alignment is an independent observation of the multivariate random variable $\{X_s\}_{s\in T}$.

The model is then specified by a marginal probability row 4-vector $\pi^s$, where $\pi^s(i)=\sP\left(X_s=i\right)$ for a fixed node $s$, and a $4\times 4$ transition probability matrix $P^{rs}$ for each edge $\{r,s\}\in E$, where $P^{rs}(i,j) = P(X_s=j|X_r=i)$. As we are interested in rooted tree topologies, it is convenient to characterise the model by $\pi^r$ where $r$ is the root and the $P^{st}$ where $\{s,t\}\in E$ and $t$ is further than $s$ from $r$. In this context we will sometimes characterise each $P^{st}$ as the result of a continuous-time Markov substitution process on $\{s,t\}$, in which case $P^{st}=\exp Q^{st}$, where $\exp$ is the matrix exponential and $Q^{st}$ is a transition rate matrix.

The statistical challenge is then, for a fixed set of terminal nodes $T$ and a set of $n$ observations of $\{X_s\}_{s\in T}$, which are the $n$ columns in our multiple sequence alignment, to show that as $n$ tends to infinity our estimates of the tree topology and the probabilistic parameters tend to the true values of the generating model.

\section{Identifiability of the Full Model}\label{identifiability_of_the_full_model}

\subsection{Identifiability with a Rooted Two-Taxon Topology}

It has been known at least since \citet{chang:1996:00} that a full Markov model is not identifiable for a rooted topology with two terminal taxa without additional assumptions. We shall now prove, under mild conditions, that a non-stationary, strand-symmetric model is identifiable for this topology. This result will provide the foundation for showing that rooted topologies of any size are recoverable, and that the full model is identifiable using joint distributions of states between pairs of taxa.

\begin{dfn}\label{sstructure}
We call a matrix $M$ {\it strand-symmetric} if it takes the form 
\begin{align*}
M=\left(\begin{matrix}
\delta &  \alpha & \beta & \gamma \\
\zeta &  \phi & \eta & \theta\\
\theta &  \eta & \phi & \zeta \\
\gamma &  \beta & \alpha & \delta
\end{matrix}\right).
\end{align*}
\end{dfn}

Note that if a transition probability matrix is strand-symmetric in this sense, it only fits the usual definition of strand symmetry if the states are properly ordered. Several such orderings are possible. One is $(A,C,G,T)$.

\begin{dfn}
We say that a probability 4-vector $\pi$ is {\it compositionally asymmetric} if it has non-zero elements and $\pi(1)\neq\pi(4)$,  $\pi(2)\neq\pi(3)$, $\pi(1)/\pi(4)\neq\pi(2)/\pi(3)$, and $\pi(1)/\pi(4)\neq\pi(3)/\pi(2)$.
\end{dfn}

We note that any stationary marginal distribution of a strand-symmetric process violates all four of the conditions of compositional asymmetry, provided that the states are again appropriately ordered as $(A,C,G,T)$ or similar. However, some distributions that are not the stationary distribution of a strand-symmetric process are included in the set of compositionally asymmetric distributions. The reason for this will become apparent in the following proof.

\begin{lem}\label{cherry}Take a two-taxon discrete-time Markov model that is defined by a root distribution $\pi^m=\left(\begin{matrix}\pi^m(1)&\ldots&\pi^m(4)\end{matrix}\right)$ and two $4\times4$ probability transition matrices $P^{ma}$ and $P^{mb}$. Define $\Pi^m=\diag\pi^m$. Assume
\begin{enumerate}[label=(\alph*),ref=\alph*]
\item $P^{ma}$ and $P^{mb}$ are strand-symmetric\label{astsy};
\item $\pi^m$ is compositionally asymmetric\label{aeig}; and
\item $P^{ma}$ and $P^{mb}$ are invertible\label{ainv}.
\end{enumerate}
Then $\pi^m$, $P^{ma}$, and $P^{mb}$ are uniquely determined by the joint probability matrix $J^{ab}=\left(P^{mb}\right)^\intercal\Pi^m P^{ma}$ up to one of eight permutations of the states at node $m$. That is, the model is identifiable up to a suitable reordering of internal states.
\end{lem}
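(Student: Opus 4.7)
The plan is to exploit the fact that every strand-symmetric $4\times 4$ matrix commutes with the Watson--Crick permutation $W$ that sends $e_i \mapsto e_{5-i}$, and is therefore simultaneously block-diagonalised by the orthogonal change of basis $H$ whose columns are $\tfrac{1}{\sqrt 2}(e_1 + e_4),\, \tfrac{1}{\sqrt 2}(e_2 + e_3),\, \tfrac{1}{\sqrt 2}(e_1 - e_4),\, \tfrac{1}{\sqrt 2}(e_2 - e_3)$, aligned with the $\pm 1$ eigenspaces of $W$. In this basis $H^\intercal P^{ma} H = \diag(P^{ma}_+,P^{ma}_-)$ and $H^\intercal P^{mb} H = \diag(P^{mb}_+,P^{mb}_-)$ are block diagonal in $2\times 2$ blocks, while
\begin{align*}
H^\intercal \Pi^m H = \begin{pmatrix} \pi_+ & \pi_- \\ \pi_- & \pi_+ \end{pmatrix}, \qquad \pi_\pm = \tfrac{1}{2}\diag(\pi^m(1) \pm \pi^m(4),\, \pi^m(2) \pm \pi^m(3)).
\end{align*}
Expanding $H^\intercal J^{ab} H$ in $2\times 2$ blocks then produces $A_{++} = (P^{mb}_+)^\intercal \pi_+ P^{ma}_+$, $A_{+-} = (P^{mb}_+)^\intercal \pi_- P^{ma}_-$, $A_{-+} = (P^{mb}_-)^\intercal \pi_- P^{ma}_+$ and $A_{--} = (P^{mb}_-)^\intercal \pi_+ P^{ma}_-$. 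Assumption (\ref{ainv}) makes each $P^{ma}_\pm, P^{mb}_\pm$ invertible (as the factors of $\det P^{ma}$ and $\det P^{mb}$), and (\ref{aeig}) makes the diagonal entries of $\pi_-$ nonzero, so all four blocks are invertible.

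Next I would form four products computable from $H^\intercal J^{ab} H$: $K_b^- = A_{-+} A_{++}^{-1} A_{+-} A_{--}^{-1}$, $K_b^+ = A_{+-} A_{--}^{-1} A_{-+} A_{++}^{-1}$, $K_a^+ = A_{++}^{-1} A_{+-} A_{--}^{-1} A_{-+}$, and $K_a^- = A_{--}^{-1} A_{-+} A_{++}^{-1} A_{+-}$. A direct cancellation, using that $\pi_+$ and $\pi_-$ are diagonal hence commute, yields $K_b^\epsilon = (P^{mb}_\epsilon)^\intercal D^2 (P^{mb}_\epsilon)^{-\intercal}$ and $K_a^\epsilon = (P^{ma}_\epsilon)^{-1} D^2 P^{ma}_\epsilon$, where $D = \pi_- \pi_+^{-1}$ is diagonal with entries $(\pi^m(1)-\pi^m(4))/(\pi^m(1)+\pi^m(4))$ and $(\pi^m(2)-\pi^m(3))/(\pi^m(2)+\pi^m(3))$. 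The last two conditions in (\ref{aeig}), $\pi^m(1)/\pi^m(4) \ne \pi^m(2)/\pi^m(3)$ and $\pi^m(1)/\pi^m(4) \ne \pi^m(3)/\pi^m(2)$, are exactly what is needed to ensure these entries differ in absolute value, so $D^2$ has two distinct nonzero eigenvalues. Each $K$ therefore admits a unique eigendecomposition up to eigenvalue labelling; I would read off the rows of $P^{mb}_\epsilon$ as (the transposes of) the right eigenvectors of $K_b^\epsilon$, and the rows of $P^{ma}_\epsilon$ as the left eigenvectors of $K_a^\epsilon$, each determined up to row scaling and a common two-fold permutation. The rows of $P^{ma}_+$ and $P^{mb}_+$ sum to one because they are sums of rows of the row-stochastic strand-symmetric matrices $P^{ma}$ and $P^{mb}$, which fixes their scalings; then $\pi_+$ is recovered as $(P^{mb}_+)^{-\intercal} A_{++} (P^{ma}_+)^{-1}$.

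All residual freedom sits in $\pi_-$ and the scalings of $P^{ma}_-, P^{mb}_-$. Passing from $D^2$ to $D$ adds two sign ambiguities, giving $2\times 2 = 4$ candidate $\pi_- = \pi_+ D$; the binary choice of how to pair the two eigenvalues of the $K$'s with the two diagonal positions of $D^2$ contributes one further two-fold ambiguity, for at most $8$ candidate parameter tuples. For each of them the remaining scalings are forced by $P^{ma}_- = \pi_-^{-1} (P^{mb}_+)^{-\intercal} A_{+-}$ and $(P^{mb}_-)^\intercal = A_{-+} (P^{ma}_+)^{-1} \pi_-^{-1}$, and inverting $H$ returns eight candidate triples $(\pi^m, P^{ma}, P^{mb})$ that all reproduce $J^{ab}$. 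I would finish by identifying these with the orbit of the true triple under the centraliser of $W$ in $S_4$, a dihedral group of order eight: flipping the sign of $\pi^m(1)-\pi^m(4)$ corresponds to swapping states $1$ and $4$ at $m$, flipping the sign of $\pi^m(2)-\pi^m(3)$ to swapping states $2$ and $3$, and interchanging the eigenvalue labelling to the coset of permutations that exchange the pair $\{1,4\}$ with the pair $\{2,3\}$.

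The main obstacle I anticipate is this final bookkeeping step: verifying that each of the eight sign/pairing choices yields a genuinely admissible strand-symmetric stochastic model (entries nonnegative, rows summing to one) and that no tuple outside this $D_4$-orbit can produce the same $J^{ab}$. The distinctness of the eigenvalues of $D^2$ guaranteed by compositional asymmetry is what rules out any continuous ambiguity, so the remaining work reduces to a purely discrete group-theoretic matching of the eight algebraic degrees of freedom with the eight strand-symmetry-preserving relabellings of the states at $m$.
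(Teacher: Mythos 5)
Your argument is correct, and it reaches the same conclusion by a genuinely different computation from the paper's. The paper stays in the original coordinates: it forms the single $4\times 4$ matrix $G=S\left(J^{ab}\right)^{-1}SJ^{ab}$ (with $S$ the anti-diagonal permutation fixed by strand symmetry), observes that $G=\left(P^{ma}\right)^{-1}\diag\left(\pi^m(1)/\pi^m(4),\pi^m(2)/\pi^m(3),\pi^m(3)/\pi^m(2),\pi^m(4)/\pi^m(1)\right)P^{ma}$, and reads the rows of $P^{ma}$ off as the left eigenvectors of one eigendecomposition, with compositional asymmetry guaranteeing the four eigenvalues are distinct and row-stochasticity fixing the scalings; $\pi^m$ and $P^{mb}$ then follow by direct inversion. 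You instead pass to the $\pm 1$ eigenbasis of the complementation permutation, reduce everything to $2\times 2$ blocks, and solve four small eigenproblems whose common spectrum is $D^2$ with $D=\pi_-\pi_+^{-1}$; your verification that the four inequalities of compositional asymmetry are precisely what make the entries of $D^2$ distinct and nonzero is the exact analogue of the paper's distinct-eigenvalue argument, just pushed through the map $x\mapsto(x-1)/(x+1)$. What your route buys is a transparent accounting of the eight-fold ambiguity: the two sign choices in $\sqrt{D^2}$ and the eigenvalue pairing visibly generate the order-eight centraliser of $W$ in $S_4$, which the paper asserts more briefly as ``eight permutations preserving strand-symmetric form.'' What it costs is extra bookkeeping that the paper's single decomposition avoids: your construction fixes $P^{ma}_-$ and $P^{mb}_-$ from $A_{+-}$ and $A_{-+}$ respectively, so you still owe a check that the resulting candidate satisfies the fourth block equation $A_{--}=\left(P^{mb}_-\right)^\intercal\pi_+P^{ma}_-$ (it does, since each candidate is the image of the true model under a hidden-state relabelling, which manifestly preserves $J^{ab}$), and the final identification of the eight candidates with the $D_4$-orbit, which you flag as open, is genuine but routine --- the paper defers essentially the same step.
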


\begin{example}\label{statex}
It is interesting that a non-reversible, stationary, strand-symmetric model is not identifiable for two taxa, as can be shown by counterexample. Take such a model defined by
\begin{align*}
\pi^m &= \left(\begin{matrix}0.20& 0.30 & 0.30 & 0.20\end{matrix}\right)\quad\text{and}\\
P^{ma} &= P^{mb} = \left(\begin{matrix}
0.38 & 0.25 & 0.21 & 0.16 \\
0.16 & 0.43 & 0.27 & 0.14 \\
0.14 & 0.27 & 0.43 & 0.16 \\
0.16 & 0.21 & 0.25 & 0.38 
\end{matrix}\right)
\end{align*}
that yields the joint probability distribution $J^{ab} = \left(P^{mb}\right)^\intercal\Pi P^{ma}$. The model defined by
\begin{align*}
\pi^m &= \left(\begin{matrix}0.20& 0.30 & 0.30 & 0.20\end{matrix}\right),\\
P^{ma} &= \left(\begin{matrix}
0.35 & 0.34 & 0.12 & 0.19 \\
0.15 & 0.46 & 0.24 & 0.15 \\
0.15 & 0.24 & 0.46 & 0.15 \\
0.19 & 0.12 & 0.34 & 0.35
\end{matrix}\right),\quad\text{and}\quad
P^{mb} = \left(\begin{matrix}
0.43 & 0.27 & 0.19 & 0.11 \\
0.07 & 0.39 & 0.31 & 0.23 \\
0.23 & 0.31 & 0.39 & 0.07 \\
0.11 & 0.19 & 0.27 & 0.43 
\end{matrix}\right)
\end{align*}
is also non-reversible, stationary, and strand-symmetric, and yields the same joint probability distribution $J^{ab}$. Both models satisfy all constraints of Lemma~\ref{cherry} except assumption~\ref{aeig}. Generating such examples is straightforward, and a Python script for generating random examples is included in the ancillary material. The above example was generated by that script and rounded to two decimal places.
\end{example}

\begin{proof}[Proof of Lemma~\ref{cherry}]
Single out the permutation matrix
\begin{align*}
S=\left(\begin{matrix}
0 & 0 & 0 & 1\\
0 & 0 & 1 & 0\\
0 & 1 & 0 & 0\\
1 & 0 & 0 & 0
\end{matrix}\right),
\end{align*}
which has the effect of reversing the order of rows or columns, depending on the direction of multiplication.

First note that if a matrix $P$ is strand-symmetric, then $P=SPS$. Also, $P^\intercal = (SPS)^\intercal = SP^\intercal S$ and $P^{-1} = (SPS)^{-1} = SP^{-1}S$.

By assumptions \ref{astsy} and \ref{ainv} and because $SS=I$, 
\begin{align*}
S\left(J^{ab}\right)^{-1}SJ^{ab} &= S \left(P^{ma}\right)^{-1}SS\left(\Pi^m\right)^{-1}\left(\left(P^{mb}\right)^\intercal\right)^{-1} S \left(P^{mb}\right)^\intercal SS \Pi^m P^{ma} \\
&= \left(P^{ma}\right)^{-1} S \left(\Pi^m\right)^{-1} S \Pi^m P^{ma} \\
&= \left(P^{ma}\right)^{-1} \diag\left(\begin{matrix}\frac{\pi^m(1)}{\pi^m(4)}&\frac{\pi^m(2)}{\pi^m(3)}&\frac{\pi^m(3)}{\pi^m(2)}&\frac{\pi^m(4)}{\pi^m(1)}\end{matrix}\right)P^{ma}.
\end{align*}
The final line is an eigendecomposition of $G=S\left(J^{ab}\right)^{-1}SJ^{ab}$ whose left-eigenvectors are the rows of $P^{ma}$. The eigenvectors of $G$ are unique by assumption~\ref{aeig}, so the eigendecomposition is unique up to scaling of the eigenvectors. As the rows of $P^{ma}$ must sum to one, these scaling factors are uniquely determined.

We have shown that the rows of $P^{ma}$ can be uniquely recovered from $J^{ab}$. The order of these rows is not identifiable without further assumptions, so for the moment we will say that $P^{ma}$ can be identified up to a set of one of eight permutations of the rows such that the resulting $P^{ma}$ is strand-symmetric in form. Then, by assumption~\ref{ainv}, $\pi^m=\mathbf{1}J^{ab}\left(P^{ma}\right)^{-1}$ and $\left(P^{mb}\right)^\intercal=J^{ab}\left(\Pi^mP^{ma}\right)^{-1}$, so $\pi^m$ and $P^{mb}$ are also identifiable up to one of eight permutations of their elements and rows that correspond to the permutation chosen for $P^{ma}$.
\end{proof}

\subsection{Identifiability of the Topology}

In this section we build on a result from \citet{chang:1996:00} to demonstrate that, under mild conditions, the rooted tree topology is identifiable under a non-stationary, strand-symmetric Markov model from the pairwise distributions of states at terminal nodes.

We first need to also import an equivalence relation between tree topologies. Let $\mathcal{T}_1=(S_1,E_1)$ and $\mathcal{T}_2=(S_2,E_2)$ be trees with the same set of terminal nodes $T$. We say that $\mathcal{T}_1$ and $\mathcal{T}_2$ are {\it equivalent} if there is a bijective ``relabelling'' function from $S_1$ to $S_2$ that is the identity function for the terminal nodes and such that the edges $E_2$ are obtained by applying the function to the nodes in the edges $E_1$. That is, the topologies $\mathcal{T}_1$ and $\mathcal{T}_2$ are equivalent if they are the same up to a possible relabeling of internal nodes.

The following is Proposition~3.1 in \citet{chang:1996:00}:

\begin{prop}\label{toporecon}
Consider a family of Markov models satisfying the following conditions:
\begin{enumerate}
\item The edge transition matrices are invertible and not equal to a permutation matrix.
\item There is a node $v$ with $\pi^v(i)>0$ for each $i\in\mathcal{C}$, that is, each character state has positive marginal probability at $v$.
\end{enumerate}
Then the unrooted topology is identifiable from the joint distributions of character states at pairs of terminal nodes. That is, if two models in the family induce the same pairwise distributions of character states at their terminal nodes, then the topologies of those two models must be equivalent.
\end{prop}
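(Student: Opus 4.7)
The plan is to extract from the observable pairwise distributions a strictly positive additive tree-metric on the terminal set $T$, and then invoke the classical fact that such a metric on a set of leaves uniquely determines the unrooted tree on which it is realised (for example via the four-point condition).

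First I would exploit the global Markov structure of the model. For any two nodes $r,s\in S$ the induced conditional distribution $P^{rs}(i,j)=\sP(X_s=j\mid X_r=i)$ factors multiplicatively along paths: $P^{rs}=P^{ru}P^{us}$ whenever $u$ lies on the unique path between $r$ and $s$. Since each edge matrix is invertible by condition~1, every induced matrix $P^{rs}$ is invertible as a product of invertibles. The positivity hypothesis at $v$ then propagates everywhere, because $\pi^u=\pi^v P^{vu}$ and an invertible stochastic matrix cannot have a zero column; hence $\pi^u(i)>0$ at every node. For each pair of terminal nodes $a,b\in T$ this allows $P^{ab}$ to be recovered directly from the observable joint as $P^{ab}=(\diag\pi^a)^{-1}J^{ab}$, with $\pi^a$ read off from the row sums of $J^{ab}$.

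Next, define $d(a,b)=-\log|\det P^{ab}|$. Multiplicativity of the determinant combined with the path factorization gives the additivity $d(a,b)=d(a,u)+d(u,b)$ whenever $u$ lies between $a$ and $b$. A standard inequality (Hadamard's inequality applied to the rows of a stochastic matrix) yields $|\det P|\leq 1$ with equality only when $P$ is a permutation matrix; condition~1 therefore implies $|\det P^e|<1$ on every edge, so $d$ is a strictly positive additive metric on $T$.

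Finally I would invoke the four-point condition: for any four distinct terminal nodes $a,b,c,d$, exactly two of the three sums $d(a,b)+d(c,d)$, $d(a,c)+d(b,d)$, $d(a,d)+d(b,c)$ are equal and the third is strictly smaller, and the smaller sum identifies the correct quartet split in the unrooted topology. These quartet splits collectively determine the unrooted tree on $T$ up to the equivalence relation defined just before the proposition. Since $d$ is computed entirely from $\{J^{ab}\}_{a,b\in T}$, any two models yielding the same pairwise distributions must induce equivalent unrooted topologies. The delicate ingredient is the strict positivity of $d$ on every edge; once this is verified by Hadamard's inequality, strict positivity on longer paths follows automatically from additivity and the remainder reduces to a standard combinatorial theorem on additive tree metrics.
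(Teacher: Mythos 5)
Your overall strategy --- convert the observable pairwise joints into a log-determinant distance, verify additivity along paths and strict positivity on edges, and hand the result to the classical theory of additive tree metrics --- is exactly the route the paper takes: it combines the $\log\det$ (paralinear) distance with Buneman's theorem, which is the same tool as your four-point condition. The Hadamard argument for $|\det P|\leq 1$ with equality only for permutation matrices is the standard justification for strict positivity on edges, and the recovery of $P^{ab}$ from $J^{ab}$ via the row sums is fine.

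There is, however, a genuine flaw in the middle step: the function $d(a,b)=-\log|\det P^{ab}|$ is not symmetric for the non-stationary models this proposition must cover, so it is not a metric and the four-point condition cannot be applied to it as written. From $P^{ba}=\left(\diag\pi^b\right)^{-1}\left(P^{ab}\right)^\intercal\diag\pi^a$ one gets $\det P^{ba}=\bigl(\prod_i\pi^a(i)/\prod_i\pi^b(i)\bigr)\det P^{ab}$, hence $d(a,b)-d(b,a)=\log\bigl(\prod_i\pi^a(i)/\prod_i\pi^b(i)\bigr)$, which is nonzero in general. Writing $g(x)=\log\prod_i\pi^x(i)$, your $d$ decomposes as $d(a,b)=\tfrac12 F(a,b)+\tfrac12\bigl(g(a)-g(b)\bigr)$, where $F(a,b)=d(a,b)+d(b,a)$ is the genuine symmetric additive tree metric. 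In the quartet sums the potential terms do not cancel uniformly: the sum for the true split $ab|cd$ is shifted by $\tfrac12(g_a-g_b+g_c-g_d)$ while the other two sums are both shifted by $\tfrac12(g_a+g_b-g_c-g_d)$, a discrepancy of $g_c-g_b$ that can exceed the internal path length and flip which sum is extremal, so your rule ``the smallest sum identifies the split'' can return the wrong quartet. The repair is one line and is precisely what the paper does: replace $d$ by the symmetrized quantity $f(\{a,b\})=-\log\det\bigl(P^{ab}P^{ba}\bigr)=-\log\det P^{ab}-\log\det P^{ba}$, which is symmetric, additive with strictly positive edge weights under condition~1, and computable from $J^{ab}$ and its transpose; the remainder of your argument then goes through unchanged.
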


As stated in \citet{chang:1996:00}, Proposition~\ref{toporecon} follows by combining a result in \citet{buneman1971recovery} with the {\it additive} function defined on pairs of nodes as
\begin{align*}
f(\{r,s\}) = -\log\det P^{rs} - \log\det P^{sr}.
\end{align*}
The function is additive in the sense that $f(\{r,s\})=\sum_{e\in\mathrm{path}(\{r,s\})}f(e)$, where $\mathrm{path}(\{r,s\})$ is the set of edges joining any two nodes $r$ and $s$. Chang attributed the $\log\det$ distance to \citet{barry:1987:00} and \citet{cavender1987invariants}. It is unclear whether he was aware that the above definition of $f$ is equivalent to the paralinear genetic distance measure introduced by \citet{lake1994reconstructing}. In any case we will prefer to formulate this function almost equivalently as
\begin{align}
f(\{r,s\}) = -\log\det\left(P^{rs} P^{sr}\right),\label{equivform}
\end{align}
because 
\begin{align*}
\det\left(P^{rs} P^{sr}\right) = \det \left(\left(\diag\pi^r\right)^{-1} \left(P^{sr}\right)^\intercal\diag\pi^s P^{sr}\right) \geq 0,
\end{align*}
where $\pi^r$ and $\pi^s$ are the marginal probability vectors at $r$ and $s$ respectively, so that the sign of $\det P^{rs}$ becomes irrelevant. The result in \citet{buneman1971recovery} shows that if the values of an additive function are known between the pairs of all terminal nodes, then under the conditions of Proposition~\ref{toporecon}, the topology and the value of the function on all edges of the topology are determined. As $f$ can be calculated from pairwise distributions between terminal nodes, so the topology and the value of $f$ on every edge can be determined.

\begin{cor}\label{rootedtoporecon}
Suppose the evolutionary tree has nodes of degree three or one, with the exception of one special node which we designate the root. The root can have degree one, two, or three. Assume that
\begin{enumerate}[label=(\Alph*),ref=\Alph*]
\item $\pi^m$ is compositionally asymmetric for every internal node $m$.\label{freig}
\end{enumerate}
Assume also that for each edge $\{u,v\}$, where $v$ is further from the root than $u$,
\begin{enumerate}[label=(\Alph*),ref=\Alph*,resume]
\item $P^{uv}$ is invertible,\label{frinv}
\item $P^{uv}$ is not a permutation matrix, and\label{frperm}
\item $P^{uv}$ is strand-symmetric.\label{frstsy}
\end{enumerate}
Then the rooted topology of the tree is recoverable from pairwise distributions of states at terminal nodes. That is, if two models in the family induce the same pairwise distributions of character states at the terminal nodes, then the topologies recovered by those two models must be equivalent.
\end{cor}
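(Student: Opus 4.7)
The plan is to invoke Proposition \ref{toporecon} to identify the unrooted topology and then to pin down the root within it using assumption (D) together with assumption (A). To apply Proposition \ref{toporecon} I verify its hypotheses: invertibility and non-permutation in the direction of descent come directly from (B) and (C); in the reverse direction, the Bayes-reversal $(\Pi^v)^{-1}(P^{uv})^{\intercal}\Pi^u$ inherits these properties from (B) and (C) combined with the positivity implied by (A). The positive-marginal condition required by Proposition \ref{toporecon} is immediate from (A) at any internal node. Thus the unrooted topology is recovered.

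To locate the root within this unrooted topology, I argue by contradiction. Suppose two models in the family share pairwise terminal distributions and the same unrooted topology but have distinct rooted topologies $\mathcal{T}_1$ and $\mathcal{T}_2$ with roots $r_1 \ne r_2$. Because rerooting a Markov model on a fixed undirected topology is a Bayes-reversal operation along each edge whose descent direction is flipped, the two models' transition matrices on each such edge are Bayes-reversals of each other; the set of such edges is nonempty, comprising all edges on the path from $r_1$ to $r_2$. Pick any such edge $\{u,v\}$, with $u \to v$ in $\mathcal{T}_1$. Then $P^{uv}$ is strand-symmetric under (D) applied to $\mathcal{T}_1$, and $R := (\Pi^v)^{-1}(P^{uv})^{\intercal}\Pi^u$ must be strand-symmetric under (D) applied to $\mathcal{T}_2$. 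A direct calculation, using $SMS = M$ for strand-symmetric $M$ and $S\Pi S = \tilde{\Pi}$ (the diagonal whose $i$-th entry is $\pi(5-i)$), reduces $SRS = R$ to $D_v P^{\intercal} D_u^{-1} = P^{\intercal}$, where $(D_w)_{jj} = \pi^w(j)/\pi^w(5-j)$. Invertibility of $P^{uv}$ from (B), together with the support pattern of a strand-symmetric matrix, forces the diagonal entries of both $D_u$ and $D_v$ to a common scalar, which collapses to $\pi^u(1) = \pi^u(4)$ and $\pi^u(2) = \pi^u(3)$, directly contradicting (A). Hence no alternative rooting is admissible.

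The main technical obstacle I anticipate is the degree-two root case, where the root sits in the interior of an edge of the unrooted topology rather than at a vertex. The edge in the unrooted topology then corresponds to two directed edges meeting at the root $r$ in the rooted model, and comparing rootings requires treating a composition through $r$ rather than a single Bayes-reversal. The same mechanism still applies: the composed edge matrix $(\Pi^u)^{-1}(P^{ru})^{\intercal}\Pi^r P^{rv}$ is strand-symmetric only when $\pi^r$ is compositionally symmetric, which (A) again forbids. With descent directions fixed on every edge, the rooted topology is reconstructed as the unique orientation in which every oriented edge emanates outward from a single point --- a vertex in the degree-three or degree-one case, or an interior point of an edge in the degree-two case.
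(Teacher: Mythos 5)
Your reduction to Proposition~\ref{toporecon} for the unrooted topology is fine, but the root-location step has a genuine gap. Your contradiction argument rests on the premise that the two models' transition matrices on each edge of the path between the two candidate roots are Bayes-reversals of one another. That premise would hold if the second model were literally the first model rerooted at $r_2$, but nothing forces this: the hypothesis is only that the two models induce the same pairwise distributions at the terminal nodes, and at this point in the development nothing relates their edge parameters or internal marginals to each other. What your calculation actually establishes is that the particular model obtained by rerooting model~1 at $r_2$ violates assumption~(\ref{frstsy}); it does not exclude the existence of some other model in the family, rooted at $r_2$ with different edge matrices and internal marginals, that reproduces the same pairwise terminal distributions. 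Closing that gap amounts to identifying the edge parameters from the pairwise distributions, which is precisely the content of Lemma~\ref{cherry} (and later Theorem~\ref{fullrecon}) --- and you never invoke Lemma~\ref{cherry}. The same objection applies to your treatment of the degree-two root: showing that the composite matrix $\left(\Pi^u\right)^{-1}\left(P^{ru}\right)^\intercal\Pi^r P^{rv}$ fails to be strand-symmetric constrains only the rerooted copy of model~1, not an arbitrary competitor.

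The paper's proof avoids this by being constructive. Lemma~\ref{cherry} is applied to each pair of terminal nodes $a,b$ to recover $\pi^s$ and $P^{sa}$, with $s$ the most recent common ancestor, up to a row permutation; the permutation-invariant quantity $f(\{s,a\})=-\log\det\left(P^{sa}P^{as}\right)$ is therefore a well-defined functional of $J^{ab}$ alone. The root is identified as the most recent common ancestor maximising this distance over $b$, and its position on the path from $a$ to the maximising taxon is pinned down by the additivity of $f$ together with the per-edge values of $f$ supplied by the unrooted reconstruction. Your algebraic observation --- that an edge cannot carry a strand-symmetric transition matrix in both descent directions unless the incident marginals fail compositional asymmetry --- is correct (your identity $D_vP^\intercal D_u^{-1}=P^\intercal$ does force $\pi^u(i)=\pi^u(5-i)$ for some $i$, since the strand-symmetric support pattern pairs row $i$ with row $5-i$) and is morally why the root is identifiable; but as written it is applied to a pair of models whose parameters you have not yet shown to be comparable.
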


\begin{proof}
Under assumptions \ref{freig}, \ref{frinv}, and \ref{frstsy}, Lemma~\ref{cherry} can be applied to any pair of terminal nodes $a$ and $b$ to obtain $\pi^s$ and $P^{sa}$, where $s$ is the most recent common ancestor of $a$ and $b$, up to a consistent permutation of their elements and rows. We can then calculate $P^{as}=\left(\diag\pi^a\right)^{-1} \left(P^{sa}\right)^\intercal\diag\pi^s$, and so $f(\{s,a\})$. Note that the value of $f(\{s,a\})$ is independent of the chosen permutation of $P^{sa}$ by \eqref{equivform}.

Now fix $a$ and calculate $c=\argmax_{b\in T} f^b(\{s,a\})$ where $f^b(\{s,a\})$ is the value of $f$ for $a$ and the most recent common ancestor of $a$ and $b$, and $T$ are the terminal nodes. The root is then the most recent common ancestor of $a$ and $c$.

By assumptions \ref{freig}, \ref{frinv}, and \ref{frperm}, Proposition~\ref{toporecon} applies and the unrooted topology is identifiable. As for Proposition~\ref{toporecon}, we can also determine the value of $f$ for every edge in the unrooted topology. As the most recent common ancestor of $a$ and $c$ must lie on the path from $a$ to $c$, $f^c(\{s,a\})$ then gives us the address of the root on that path. That is, we have determined the rooted topology and the value of $f$ for every edge in the topology.
\end{proof}

\subsection{Pairwise Distributions Determine the Model}

We now approach the main result in Theorem~\ref{fullrecon}. We show that a non-stationary, strand-symmetric model is fully identifiable given pairwise joint state distributions between all taxa. Up to this point we have made limited assertions about the labelling of states at internal nodes, but we will now make stronger assumptions to remove this ambiguity. One value of knowing this labelling is that it will allow us to fit continuous-time models. 

Start by defining two sets of matrices, the first of which is previously defined in \citet{chang:1996:00}.

\begin{dfn}
A set of matrices $\mathcal{M}$ is {\it reconstructible from rows} if for each $M\in\mathcal{M}$ and each permutation matrix $R\neq I$, we have $RM \notin \mathcal{M}$.
\end{dfn}

As we mentioned in the introduction, the assumption in \citet{chang:1996:00} that every transition probability matrix belong to a class of matrices that is reconstructible from rows is used to identify the labelling of states at internal nodes, which otherwise would not be identifiable.

\begin{example}
To illustrate the abstract concept of a set of matrices that is reconstructible from rows, we borrow an example from \citep{chang:1996:00}. We say that a square matrix $P$ is {\it diagonal largest in column} (DLC) if $P(j,j)>P(i,j)$ for all $i\neq j$. The set of DLC matrices is reconstructible from rows. 
\end{example} 

While DLC serves as a good means for conceptualising a set of matrices that is reconstructible from rows, it has also been shown to be useful as an empirical test for model identifiability \citep{kaehler2015genetic}. As remarked in \citet{chang:1996:00}, it is not the only set of matrices that is reconstructible from rows but might be a reasonable assumption. Unfortunately, as branch lengths increase, the assumption of DLC becomes less reasonable.

\begin{example}\label{sympexample}
Under very weak assumptions, as branch length increases the transition probability matrix on that branch must tend towards its stationary limit. That is, all its rows must tend towards equality. Rows that are closer to equality are in some sense less likely to satisfy DLC. To illustrate the point,
\begin{align*}
P &= \left(\begin{matrix}0.5&0.3&0.2\\0.1&0.4&0.5\\0.1&0.1&0.8\end{matrix}\right)\quad\text{is DLC}.\\
P^2 &= \left(\begin{matrix}0.3&0.29&0.41\\0.14&0.24&0.62\\0.14&0.15&0.71\end{matrix}\right)\quad\text{is not DLC}.
\end{align*}
\end{example}

However, it is an important observation that there is no way of permuting the rows of the second matrix in the above example so that it is DLC. We are therefore motivated to coin the following term.

\begin{dfn}
A set of matrices $\mathcal{S}(\mathcal{M})$ is {\it sympathetic} to a set of matrices $\mathcal{M}$ if it contains all matrices $M$ such that for every permutation matrix $R\neq I$, we have $RM \notin \mathcal{M}$.
\end{dfn}

Note that if $\mathcal{M}$ is reconstructible from rows then $\mathcal{M}\subseteq\mathcal{S}(\mathcal{M})$. To slightly abuse the terminology, sympathetic matrices are useful because they provide matrices that might not be reconstructible from rows, but that will not contradict the ordering of states of internal nodes that is suggested by matrices that are. 

\begin{example}The second matrix in Example~\ref{sympexample} is not DLC but is from the set of matrices that is sympathetic to the set of DLC matrices.
\end{example}

The following theorem extends and restricts Theorem~4.1 in \citet{chang:1996:00}. It makes an additional assumption of strand symmetry, but in so doing is able to accommodate rooted topologies. Motivated by Remark~4 in \citet{chang:1996:00}, it relaxes the assumption that every edge must be reconstructible from rows. Theorem~4.1 in \citet{chang:1996:00} allows {\it degenerate} topologies, where nodes are allowed to have degree greater than three. We assume nondegenerate topologies for ease of exposition and because we will only make that assumption later anyway when we prove that the model can be statistically consistently estimated, as Chang does in his Theorem~5.1. We note that extension of the following theorem to degenerate topologies should be straightforward.

\begin{thm}\label{fullrecon}
Assume the conditions for Corollary~\ref{rootedtoporecon}. Define $\mathcal{M}$ as a set of matrices that is reconstructible from rows and $\mathcal{S}(\mathcal{M})$ as the set of matrices that are sympathetic to $\mathcal{M}$. Further assume that 
\begin{enumerate}[label=(\Alph*),ref=\Alph*,start=5]
\item there exists a path from every internal node $m$ to a terminal node such that for each edge $\{u,v\}$ in the path, where $u$ is closer to $m$ than $v$, $P^{uv}\in\mathcal{M}$, and that\label{frrfr}
\item $P^{uv}\in\mathcal{S}(\mathcal{M})$ for every edge $\{u,v\}$ in the topology\label{frwrfr}.
\end{enumerate}
Then the full model is identifiable. That is, the topology and all of the transition probability matrices are uniquely determined by the joint distribution of character states at the terminal nodes of the tree.
\end{thm}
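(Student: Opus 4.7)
The plan is a three-stage argument. First, I would apply Corollary~\ref{rootedtoporecon}, whose hypotheses are precisely \ref{freig}--\ref{frstsy}, to conclude that the rooted topology is uniquely recovered from the pairwise joint distributions at terminals. With the topology in hand, for each internal node $m$ I would apply Lemma~\ref{cherry} to a pair of terminal taxa $(a,b)$ whose most recent common ancestor is $m$ (available on any nondegenerate rooted binary topology), recovering $\pi^m$ and the composite matrices $P^{ma}, P^{mb}$ uniquely up to one of the eight strand-symmetry-preserving row-permutations at $m$. Each individual edge matrix $P^{uv}$ of the topology is then obtained from these cherry reconstructions: for any terminal $a$ in the subtree below the child $v$ of $u$ we have $P^{ua} = P^{uv} P^{va}$, and invertibility~\ref{frinv} gives $P^{uv} = P^{ua}(P^{va})^{-1}$. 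Consequently the full model is determined by the terminal joint distributions up to a collection of strand-symmetry-preserving permutations $\{R_m\}$, one at each internal node.

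To pin these permutations down, suppose Model~1 and Model~2 both satisfy every hypothesis of the theorem and induce identical terminal joint distributions; then they differ only by such a collection $\{R_m\}$, with $R_t = I$ at every terminal $t$. Fix an internal node $m$. By \ref{frrfr} applied to Model~2 there is a path $m = v_0, v_1, \ldots, v_k = t$ to a terminal along which $P^{v_i v_{i+1}}_{\text{M2}} \in \mathcal{M}$ for every $i$. The relation between the models reads $P^{v_i v_{i+1}}_{\text{M2}} = R_{v_i}\, P^{v_i v_{i+1}}_{\text{M1}}\, R_{v_{i+1}}^{\intercal}$, so I would propagate inward from $t$: at the leaf edge the identity becomes $R_{v_{k-1}} P^{v_{k-1} t}_{\text{M1}} \in \mathcal{M}$, and since \ref{frwrfr} gives $P^{v_{k-1} t}_{\text{M1}} \in \mathcal{S}(\mathcal{M})$, the definition of sympathy forces $R_{v_{k-1}} = I$. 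An immediate induction then yields $R_{v_i} = I$ for every $i$ along the path, and in particular $R_m = I$. Since $m$ was arbitrary, the two models coincide.

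The principal obstacle is that Model~1 and Model~2 need not share the same $\mathcal{M}$-path out of $m$: along Model~2's chosen path we only know Model~1's matrices lie in $\mathcal{S}(\mathcal{M})$, not in $\mathcal{M}$. Assumption \ref{frwrfr}, which constrains \emph{every} edge rather than just those on any one chosen path, is exactly what bridges this gap and makes the inductive propagation go through. A secondary point worth verifying explicitly is that the eight-fold ambiguity produced by Lemma~\ref{cherry} at each internal node corresponds precisely to those permutations that preserve strand symmetry — arbitrary relabellings would destroy the strand-symmetric form and so take one outside the family — which justifies restricting attention to this set of $R_m$.
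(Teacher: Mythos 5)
Your proposal is correct and rests on the same three ingredients as the paper's proof --- Corollary~\ref{rootedtoporecon} for the rooted topology, Lemma~\ref{cherry} applied at cherries, and the interplay between $\mathcal{M}$ and $\mathcal{S}(\mathcal{M})$ along the paths guaranteed by assumption~\ref{frrfr} --- but it organises the final, permutation-fixing step genuinely differently. The paper runs a constructive leaf-peeling induction: it locates a terminal edge whose matrix lies in $\mathcal{M}$ (testable because, by assumption~\ref{frwrfr}, at most one of the eight permuted candidates produced by Lemma~\ref{cherry} can lie in $\mathcal{M}$), recovers that matrix exactly, strips the edge, turns its internal endpoint into a terminal of the resulting subtrees via $J^{im}=J^{ia}\left(P^{ma}\right)^{-1}$, and recurses; assumption~\ref{frrfr} enters only implicitly, to guarantee that the ``restart with a different $a$'' step eventually succeeds. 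You instead cast identifiability as a uniqueness statement: two models inducing the same terminal distributions can differ only by a permutation $R_m$ at each internal node, and you force each $R_m=I$ by propagating inward from the leaf along Model~2's $\mathcal{M}$-path while invoking assumption~\ref{frwrfr} on Model~1's matrices at every step. This makes the role of assumption~\ref{frrfr} more transparent and avoids the bookkeeping of recomputed joint distributions on shrinking subtrees, at the cost of being non-constructive where the paper's argument doubles as a reconstruction algorithm. Two small points would tighten your write-up: the permutation $R_m$ should be shown to be the same across different cherry pairs sharing the ancestor $m$ (immediate from invertibility of the composite matrices $P^{ma}$, assumption~\ref{frinv}), and when the root is a terminal node $r$ its incident edge is not covered by any cherry and needs the direct computation $P^{ra}=\left(\diag\pi^r\right)^{-1}J^{ra}$, which your edge-quotient formula uses implicitly.
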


\begin{remark}
The relaxation of the assumption that every transition probability matrix must be reconstructible from rows to the assumption that every matrix must be sympathetic to those which are might not seem important, but in fact it greatly increases the range of models to which the theorem applies. By allowing sympathetic transition probability matrices, we are in effect allowing long branches in the tree as long as there are sufficient short branches to consistently reconstruct the labelling of states at internal nodes.
\end{remark}

\begin{proof}[Proof of Theorem~\ref{fullrecon}]
We have from Corollary~\ref{rootedtoporecon} the full rooted topology. It is enough to then determine a transition probability matrix for each edge. We will proceed by induction.

If the topology has two terminal nodes and one node is the root then the full model is trivially determined.

If the topology has two terminal nodes and neither node is the root, then by assumptions \ref{freig}, \ref{frinv}, and \ref{frstsy} we can apply Lemma~\ref{cherry} to recover the model up to a permutation of labels of internal states. At least one transition probability matrix must be reconstructible from rows by assumption~\ref{frrfr} and the other at worst will not give an alternative ordering by assumption~\ref{frwrfr}, so the ordering of states at the internal node can be recovered and the full model is determined.

If the topology has more than two terminal nodes, choose an arbitrary terminal node $a$ and denote its neighbouring node $m$. We treat two cases separately.

If $a$ is not the root, choose another terminal node $b$ such that $m$ is the most recent common ancestor of $a$ and $b$. By assumptions \ref{freig}, \ref{frinv}, and \ref{frstsy}, we can apply Lemma~\ref{cherry} to $a$ and $b$ to determine $P^{ma}$ up to a permutation of labels of internal states. By assumption~\ref{frwrfr} we can deduce whether $P^{ma}\in\mathcal{M}$, and if it is we can infer the correct permutation to recover $P^{ma}$. Otherwise, restart this iteration with a different $a$.

If $a$ is the root, choose two other terminal nodes $b$ and $c$ such that the most recent common ancestor of $b$ and $c$ is $m$. Again by assumptions \ref{freig}, \ref{frinv}, and \ref{frstsy}, we can apply Lemma~\ref{cherry} to $b$ and $c$ to obtain a stochastic matrix $V$ and a diagonal matrix $U$ such that $P^{mb}=RV$ and $\diag \pi^m=RUR^\intercal$ for some permutation matrix $R$. Then $P^{ma}=RU^{-1}\left(V^{-1}\right)^\intercal J^{ba}$. Once more if $P^{ma}\in\mathcal{M}$, we infer the correct matrix $R$ and we know $P^{ma}$. Otherwise, restart this iteration with a different $a$.

So we are able to determine $P^{ma}$ for a terminal node $a$ and its neighbour $m$. The induction step is then to remove the edge $\{m,a\}$ from the tree, and make $m$ a terminal node in two subtrees, or one subtree if $m$ happens to be the root. For any subtree, $J^{im}=J^{ia}\left(P^{ma}\right)^{-1}$ for any terminal node of the subtree $i$. This step can be repeated until all subtrees have two taxa and the full model is recovered on all edges.
\end{proof}

\subsection{Continuous-Time Models}

We will now show that Theorem~\ref{fullrecon} is still applicable in almost all cases if we further restrict the model to be continuous-time. We state this explicitly as some care is required because the matrix logarithm can have multiple roots. Continuous-time models provide more information than discrete-time models, for instance for questions concerning genetic distance and relative rates of evolution.

\begin{thm}\label{continuousfullrecon}
Make the assumptions of Theorem~\ref{fullrecon}. Additionally assume that for every $\{u,v\}\in E$, where $v$ is further from the root than $u$, there exists a unique mapping $Q^{uv}=\log P^{uv}$ where every off-diagonal element of $Q^{uv}$ is non-negative. Also replace assumption~\ref{frstsy} with the assumption that $Q^{uv}$ is strand-symmetric. Then the full model is identifiable from the joint distribution of states at the terminal nodes of the tree.
\end{thm}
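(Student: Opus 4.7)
The plan is to reduce this theorem to Theorem~\ref{fullrecon} applied to the induced discrete-time transition matrices $P^{uv} = \exp Q^{uv}$, and then to invert the matrix exponential on each edge using the uniqueness clause in the hypothesis.

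First I would verify that strand-symmetry of $Q^{uv}$ implies hypothesis~\ref{frstsy} of Corollary~\ref{rootedtoporecon} and Theorem~\ref{fullrecon}, namely strand-symmetry of $P^{uv}$. Let $S$ be the reversal permutation matrix introduced in the proof of Lemma~\ref{cherry}; since $S^2 = I$, a matrix $M$ is strand-symmetric in the sense of Definition~\ref{sstructure} exactly when $SMS = M$, as was observed there. Conjugation by $S$ commutes with the matrix exponential, so if $SQ^{uv}S = Q^{uv}$ then
\begin{align*}
SP^{uv}S \;=\; S(\exp Q^{uv})S \;=\; \exp(SQ^{uv}S) \;=\; \exp Q^{uv} \;=\; P^{uv},
\end{align*}
so $P^{uv}$ is strand-symmetric. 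All the other assumptions of Theorem~\ref{fullrecon} transfer unchanged, so by that theorem the rooted topology and every transition probability matrix $P^{uv}$ are uniquely determined by the pairwise joint distributions at the terminal nodes.

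Next, to recover each $Q^{uv}$ from $P^{uv}$, I would simply invoke the additional hypothesis: on each edge there is, by assumption, a unique matrix logarithm of $P^{uv}$ whose off-diagonal entries are non-negative, and this matrix is $Q^{uv}$. Since $P^{uv}$ has already been identified, so is $Q^{uv}$, and the full continuous-time model is identifiable.

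I expect no serious obstacle. The only piece of real content is the commutation argument showing that strand-symmetry of the rate matrix transfers through $\exp$ to the transition matrix; after that, Theorem~\ref{fullrecon} does all the work. The subtle point, and the reason the uniqueness clause is baked into the hypothesis rather than derived, is the well-known multivaluedness of the matrix logarithm flagged earlier in the paper via \citet[Section~2.3]{higham:2008:00}: without restricting to the branch with non-negative off-diagonals, a given $P^{uv}$ could correspond to several distinct rate matrices, breaking continuous-time identifiability even though the discrete-time model is pinned down.
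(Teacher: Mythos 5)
Your proposal is correct and follows essentially the same route as the paper: show that strand-symmetry of $Q^{uv}$ (i.e.\ $SQ^{uv}S=Q^{uv}$) passes through the matrix exponential to give $SP^{uv}S=P^{uv}$, invoke Theorem~\ref{fullrecon} to identify the topology and all $P^{uv}$, and then use the assumed uniqueness of the non-negative-off-diagonal logarithm to pin down each $Q^{uv}$. The only cosmetic difference is that the paper justifies the commutation step via the power series $\exp Q=\sum_n (n!)^{-1}Q^n$ together with $SQ^nS=(SQS)^n$, whereas you cite conjugation-invariance of $\exp$ directly; these are the same argument.
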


\begin{remark}
The restriction that the mapping $\log P^{uv}$ be unique may seem like a barrier to implementation of this model. It is difficult to imagine a sufficiently general parametrisation of $Q^{uv}$ that would guarantee this property. However, empirical evidence suggests that it may be better to beg forgiveness than to ask permission \citep{kaehler2015genetic}. Software exists for checking whether a $4\times 4$ Markov generator enjoys this property, and it was found that it was rarely necessary to actually enforce this constraint for the data sets considered in \citet{kaehler2015genetic}.
\end{remark}

\begin{proof}
Again set
\begin{align*}
S=\left(\begin{matrix}
0 & 0 & 0 & 1\\
0 & 0 & 1 & 0\\
0 & 1 & 0 & 0\\
1 & 0 & 0 & 0
\end{matrix}\right).
\end{align*}
Note that if $Q=SQS$ for a square matrix $Q$, then $P=SPS$ for $P=\exp Q$. This follows from the expansion $P=\sum_{i=1}^\infty (n!)^{-1}Q^n$ and that $Q^n=SQ^nS$ because for example $SQ^nS=SQSSQS\ldots SQS$. The assumption that $Q^{uv}$ is strand-symmetric therefore also ensures that assumption~\ref{frstsy} of Theorem~\ref{fullrecon} is satisfied.

By Theorem~\ref{fullrecon}, the mapping from the joint distribution of states at the terminal nodes of the tree to the root marginal probability distribution, the tree topology, and the transition probability matrices $P^{uv}$ is unique, so by assumption this property is also enjoyed by the transition rate matrices $Q^{uv}$.
\end{proof}

\section{Reconstruction from Data: Consistency of Maximum Likelihood}\label{consistency_of_the_full_model}

As an application of the above identifiability results, we shall now prove that the above non-stationary, strand-symmetric models can be statistically consistently estimated from multiple sequence alignments. That is, as the length of the alignment increases, the inferred model must in some sense converge to the correct model.

We will need to use the following definition from \citet{chang:1996:00}, where $\overline{\mathcal{M}}$ denotes the closure of the set of matrices $\mathcal{M}$ under the Euclidean metric.

\begin{dfn}
We say that a set of matrices $\mathcal{M}$ is {\it strongly reconstructible from rows} if, for each $M\in\mathcal{M}$ and each permutation matrix $R\neq I$, we have $RM\notin\overline{\mathcal{M}}$.
\end{dfn}

We also define the following set, where $B_\epsilon(X)$ is the open ball of radius $\epsilon$ centred at $X$ under the Euclidean metric.

\begin{dfn}
A set of matrices $\mathcal{S}_\epsilon(\mathcal{M})$ is {\it strongly sympathetic} to a set of matrices $\mathcal{M}$ if it contains all matrices $M$ such that for every permutation matrix $R\neq I$, we have $B_\epsilon(RM)\cap\mathcal{M}=\varnothing$ for some $\epsilon>0$.
\end{dfn}

\begin{thm}\label{consistency}
Let $\{\mathbb{P}_\theta\}_{\theta\in\Theta}$ be a set of Markov models on trees that have a fixed set of terminal nodes. Suppose the models satisfy the assumptions of Theorem~\ref{fullrecon}, with the exception of assumptions \ref{frrfr} and \ref{frwrfr}, which we restrict slightly to instead assume that from every internal node $m$ there exists a path to a terminal node such that for every edge $\{u,v\}$ in the path, $P^{uv}\in\mathcal{M}$, where $v$ is further from $m$ than $u$, and for every edge $\{u,v\}\in E$, $P^{uv}\in\mathcal{M}\cup\mathcal{S}_\epsilon\left(\mathcal{M}\right)$, $\mathcal{M}$ is strongly reconstructible from rows, and $\mathcal{S}_\epsilon\left(\mathcal{M}\right)$ is strongly sympathetic to $\mathcal{M}$.

Then the method of maximum likelihood consistently recovers the topology, root marginal probability distribution, and edge transition probability matrices. That is, for any $\theta\in\Theta$, let $\hat{\theta}_n$ denote the maximum likelihood estimate based on $n$ independent observations $\{X_T^i\}_{i\in\{1,\ldots,n\}}$ of character states at the terminal nodes of the tree, then
\begin{align*}
\hat{\theta}_n\cvgas\theta\quad\text{as}\quad n\rightarrow\infty.
\end{align*}
\end{thm}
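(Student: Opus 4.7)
The plan is to adapt Wald's classical consistency argument for maximum likelihood to this phylogenetic setting, following the strategy of Theorem~5.1 in \citet{chang:1996:00}: show that the MLE sequence almost surely lives in a compact set, extract convergent subsequences, and use the identifiability result of Theorem~\ref{fullrecon} together with the strong law of large numbers to pin down the limit. The strengthening of \ref{frrfr} and \ref{frwrfr} from \emph{reconstructible from rows} and \emph{sympathetic} to their strong analogues in the hypotheses of Theorem~\ref{consistency} exists precisely so that the identifiability hypotheses are preserved under the limits that appear in this consistency argument.

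First I would fix the set of terminal nodes $T$ and observe that the total parameter space is a finite disjoint union, over the finitely many rooted topologies on $T$, of subsets of a product of probability simplices (for $\pi^m$) and stochastic-matrix cubes (for each $P^{uv}$). Each such subset is bounded in Euclidean space and so has compact closure. Because there are only finitely many topologies, it is enough to show that any subsequential limit $\theta^* = (\mathcal{T}^*, \pi^*, \{P^{*uv}\})$ of $\{\hat{\theta}_n\}$ agrees with the true $\theta$ up to the equivalence relations built into Theorem~\ref{fullrecon}.

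Next, by the strong law of large numbers applied to the multinomial counts of the $|\mathcal{C}|^{|T|}$ observable patterns, the empirical joint distribution at the terminal nodes converges almost surely to the true joint distribution $J = \sP_\theta(X_T = \cdot)$. Under assumptions \ref{freig} and \ref{frinv} every terminal pattern has strictly positive probability, so the log-likelihood is continuous and bounded away from $-\infty$ on the compact parameter set. A uniform law of large numbers argument, or equivalently a direct Kullback--Leibler computation, then gives that any subsequential limit $\theta^*$ of $\hat{\theta}_n$ satisfies $\sP_{\theta^*}(X_T = \cdot) = \sP_\theta(X_T = \cdot)$: it reproduces the true joint distribution at the terminal nodes.

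The main obstacle, and the step that motivates the strong hypotheses, is to conclude $\theta^* = \theta$ from $\sP_{\theta^*} = \sP_\theta$. To invoke Theorem~\ref{fullrecon} at $\theta^*$ we need $\theta^*$ itself to satisfy assumptions \ref{freig}--\ref{frstsy}, \ref{frrfr}, and \ref{frwrfr}. Invertibility, strand symmetry, compositional asymmetry of the internal marginals, and not being a permutation matrix are closed or semi-closed properties and transfer to the limit once they hold along the sequence, or can be argued from the terminal joint distributions alone. The delicate point is the row-reconstructibility along the designated path from each internal node to some terminal node: the na\"{\i}ve class $\mathcal{M}$ is not in general closed, and a limit of matrices in $\mathcal{M}$ might coincide with a non-trivial permutation of some other matrix in $\mathcal{M}$, destroying the labelling recovered in Theorem~\ref{fullrecon}. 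The definition of \emph{strongly reconstructible from rows} forces $RM \notin \overline{\mathcal{M}}$ for every $M\in\mathcal{M}$ and every non-identity permutation $R$, and the definition of \emph{strongly sympathetic} provides a uniform open-ball buffer around every such $RM$; together they guarantee that if the true $P^{uv}$ lies in $\mathcal{M}\cup\mathcal{S}_\epsilon(\mathcal{M})$ then the corresponding limit $P^{*uv}$ cannot be permuted non-trivially into $\mathcal{M}$, so the internal-node labelling recovered by the constructive procedure of Theorem~\ref{fullrecon} is stable under the limit. This yields $\theta^* = \theta$ modulo the accepted internal-node equivalence, and since this holds for every subsequential limit, $\hat{\theta}_n \cvgas \theta$.
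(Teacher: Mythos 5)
Your proposal takes essentially the same route as the paper: both reduce the theorem to a Wald-type consistency lemma (the paper simply cites this as Lemma~\ref{changwald}, Chang's Lemma~5.1, where you re-derive its content via compactness, subsequential limits, and the strong law), and both locate the real work in extending the identifiability of Theorem~\ref{fullrecon} from $\Theta$ to $\overline{\Theta}$, with the strong reconstructibility and strong sympathy hypotheses doing exactly the job you describe of stabilising the internal-node labelling under limits.

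One imprecision is worth fixing. You assert that invertibility, compositional asymmetry of the internal marginals, and not being a permutation matrix ``are closed or semi-closed properties and transfer to the limit once they hold along the sequence.'' They are not: each is an open condition and can fail at a limit point of $\Theta$. The correct argument --- and the one the paper actually gives --- is your fallback clause: these properties are forced on any $\theta^*\in\overline{\Theta}$ satisfying $\mathcal{P}_{\theta^*}=\mathcal{P}_{\theta}$ by the terminal joint distribution itself. For invertibility and non-permutation this is inherited from the proof of Chang's Theorem~5.1 via the form of $\mathcal{P}_{\theta}$; for compositional asymmetry, if it failed at an internal node of $\theta^*$ then for some pair of terminal nodes $a$ and $b$ the matrix $S\left(J^{ab}\right)^{-1}SJ^{ab}$ would have a repeated eigenvalue, contradicting equality with the true joint distribution, whose corresponding matrix has four distinct eigenvalues by assumption~\ref{freig}. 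Only strand symmetry genuinely survives by closedness of the set of strand-symmetric matrices. Since you explicitly flag the joint-distribution route as an alternative, this is a repairable looseness rather than a structural flaw, but as written the primary justification offered for three of the four properties is false.
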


\begin{thm}\label{continuousconsistency}
Let $\{\mathbb{P}_\theta\}_{\theta\in\Theta}$ be a set of Markov models on trees that have a fixed set of terminal nodes. Suppose the models satisfy the assumptions of Theorem~\ref{consistency}. Additionally assume that for every $\{u,v\}\in E$, where $v$ is further from the root than $u$, that there exists a unique mapping $Q^{uv}=\log P^{uv}$, where every off-diagonal element of $Q^{uv}$ is non-negative. Also replace assumption~\ref{frstsy} from Theorem~\ref{fullrecon} with the assumption that $Q^{uv}=SQ^{uv}S$.

Then the method of maximum likelihood consistently recovers the topology, root marginal probability distribution, and edge transition rate matrices. That is, for any $\theta\in\Theta$, let $\hat{\theta}_n$ denote the maximum likelihood estimate based on $n$ independent observations $\{X_T^i\}_{i\in\{1,\ldots,n\}}$ of character states at the terminal nodes of the tree, then $\hat{\theta}_n\cvgas\theta$ as $n\rightarrow\infty$.
\end{thm}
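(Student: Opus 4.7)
The plan is to reduce to Theorem~\ref{consistency} via the exponential map and then transfer the resulting almost-sure convergence of the estimated transition probability matrices back to the rate matrices using the uniqueness-of-log hypothesis. To invoke Theorem~\ref{consistency} I first verify that its hypotheses hold under the continuous-time assumptions: invertibility of every $P^{uv}$ is automatic because $\exp Q^{uv}$ is invertible; the compositional asymmetry, non-permutation, reconstructibility, and strong-sympathy conditions are imposed directly on the $P^{uv}$; and strand symmetry of $P^{uv}$ follows from that of $Q^{uv}$ by the power-series argument already used in the proof of Theorem~\ref{continuousfullrecon}, since $SS=I$ gives $(Q^{uv})^n=S(Q^{uv})^n S$ for every $n$ and hence $P^{uv}=\exp Q^{uv}=SP^{uv}S$.

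Next, reparametrize the continuous-time family by the map
\begin{align*}
\Psi\colon(\mathcal{T},\pi^r,\{Q^{uv}\})\mapsto(\mathcal{T},\pi^r,\{\exp Q^{uv}\}),
\end{align*}
where $\mathcal{T}$ denotes the topology. The uniqueness-of-log assumption makes $\Psi$ injective, and the likelihood of an observation at the terminal nodes depends on the parameters only through $\Psi$. Consequently the continuous-time MLE $\hat\theta_n$ corresponds under $\Psi$ to a maximiser of the discrete-time likelihood over the sub-family $\Psi(\Theta)$, in which the true parameter lies. Theorem~\ref{consistency} applies to this sub-family by the previous step and yields $\Psi(\hat\theta_n)\cvgas\Psi(\theta)$; in particular the topology is eventually correct, $\hat\pi^r_n\cvgas\pi^r$, and each $\hat P^{uv}_n\cvgas P^{uv}$.

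The main obstacle is the last step, transferring convergence of $\hat P^{uv}_n$ to $\hat Q^{uv}_n$. Because $\hat Q^{uv}_n$ lies in the continuous-time parameter space, it is a strand-symmetric rate matrix with non-negative off-diagonal entries satisfying $\exp\hat Q^{uv}_n=\hat P^{uv}_n$. I would argue by subsequences on the almost-sure event on which $\hat P^{uv}_n\to P^{uv}$. Since $P^{uv}$ is invertible, eigenvalues of $\hat P^{uv}_n$ are eventually bounded away from zero, which together with an obvious upper bound on $\|\hat P^{uv}_n\|$ forces $\|\hat Q^{uv}_n\|$ to be eventually bounded. Every subsequence then admits a further convergent subsequence with some limit $Q^*$; by continuity of $\exp$ and closedness of the set of strand-symmetric rate matrices with non-negative off-diagonals, $Q^*$ lies in the allowed set and satisfies $\exp Q^*=P^{uv}$, so the uniqueness-of-log hypothesis forces $Q^*=Q^{uv}$. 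Since every subsequential limit equals $Q^{uv}$, we conclude $\hat Q^{uv}_n\cvgas Q^{uv}$, and combining across the finitely many edges with the topology and root distribution gives $\hat\theta_n\cvgas\theta$ as claimed.
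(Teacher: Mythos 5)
Your plan is correct and reaches the right conclusion, but it is structurally different from the paper's argument. The paper does not invoke Theorem~\ref{consistency} as a black box on the exponentiated family; it re-runs the Wald-type argument of Lemma~\ref{changwald} directly in the continuous-time parametrisation, so that the only new thing to check is identifiability: if $\mathcal{P}_\theta=\mathcal{P}_{\theta_0}$ for $\theta\in\overline{\Theta}$ and $\theta_0\in\Theta$, then the transition matrices agree edge by edge (by the argument already given for Theorem~\ref{consistency}), and the uniqueness-of-logarithm hypothesis then forces $Q=Q_0$; consistency in the $Q$-parametrisation follows from Lemma~\ref{changwald} in one step, with no need to convert convergence of $\hat P^{uv}_n$ into convergence of $\hat Q^{uv}_n$. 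You instead push the family forward through $\exp$, apply Theorem~\ref{consistency} to the image, and pull the convergence back by a boundedness/subsequence argument. That works, and your final step can be made clean: for a generator $Q$ with non-negative off-diagonal entries and zero row sums, $\det\exp Q=\exp(\operatorname{tr}Q)$, so a lower bound on $\det\hat P^{uv}_n$ bounds $-\operatorname{tr}\hat Q^{uv}_n=\sum_{i\neq j}\hat Q^{uv}_n(i,j)$ and hence every entry of $\hat Q^{uv}_n$; every subsequential limit is then a valid generator whose exponential is $P^{uv}$ and so equals $Q^{uv}$ by the uniqueness hypothesis. (This is a tighter justification than your appeal to eigenvalue bounds, which does not by itself control the norm of a matrix logarithm.) The trade-off between the two routes: yours requires this extra analytic lifting step and the identification $\Psi(\overline{\Theta})=\overline{\Psi(\Theta)}$ (which holds by continuity of $\exp$ and compactness of $\overline{\Theta}$, but should be stated so that the continuous-time MLE really is a maximiser over the closure of the image family), whereas the paper's route avoids both by never leaving the Wald framework, at the cost of having to observe that the identifiability verification in the proof of Theorem~\ref{consistency} transfers verbatim to the rate-matrix parametrisation.
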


The following is a restatement of Lemma~5.1 in \citet{chang:1996:00}, where a sketch of its proof is provided.

\begin{lem}\label{changwald}
Let $\mathcal{X}$ be a finite set and let $\{\mathcal{P}_\theta\}_{\theta\in\Theta}$ be a set of probability distributions on $\mathcal{X}$, where the closure $\overline{\Theta}$ of $\Theta$ is a compact subset of a metric space. Let $\{X_i\}_{i\in\integers}$ be independent and identically distributed random variables (or vectors) with probability distribution $\mathcal{P}_{\theta_0}$ for some $\theta_0\in\Theta$. Assume the identifiability condition
\begin{align*}
\mathcal{P}_\theta\neq\mathcal{P}_{\theta_0}\quad\text{for each}\quad\theta\in\overline{\Theta}\quad\text{with}\quad\theta\neq\theta_0.
\end{align*}
Suppose that for each $x\in\mathcal{X}$ the function $\theta\mapsto\mathcal{P}_\theta(x)$ is continuous on $\overline{\Theta}$ and let $\hat{\theta}_n=\hat{\theta}_n(X_1,\ldots,X_n)$ maximise the log likelihood $\sum_{i=1}^n\log\mathcal{P}_\theta(X_i)$ over $\theta\in\overline{\Theta}$. Then under $\mathcal{P}_{\theta_0}$, $\hat{\theta}_n\cvgas\theta_0$ as $n\rightarrow\infty$.
\end{lem}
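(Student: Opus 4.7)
The plan is to adapt the classical Wald consistency argument to this finite-sample-space setting. Set $\ell(\theta) := \mathbb{E}_{\theta_0}[\log\mathcal{P}_\theta(X_1)] = \sum_{x\in\mathcal{X}}\mathcal{P}_{\theta_0}(x)\log\mathcal{P}_\theta(x)$, viewed as an extended-real-valued function with the convention $0\cdot\log 0 = 0$, and write $L_n(\theta) = n^{-1}\sum_{i=1}^n\log\mathcal{P}_\theta(X_i)$ for the empirical log-likelihood. The first task is to show that $\theta_0$ is the unique maximiser of $\ell$ on $\overline{\Theta}$: since $\mathcal{X}$ is finite, $\ell(\theta_0) = -H(\mathcal{P}_{\theta_0})$ is finite, and Gibbs' inequality gives $\ell(\theta_0) - \ell(\theta) = D(\mathcal{P}_{\theta_0}\,\|\,\mathcal{P}_\theta) \geq 0$, with equality iff $\mathcal{P}_\theta = \mathcal{P}_{\theta_0}$; by the hypothesised identifiability this forces $\theta = \theta_0$.

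Next I would run a subsequential compactness argument. Take any subsequence of $\{\hat\theta_n\}$; by compactness of $\overline{\Theta}$, extract a further subsequence $\hat\theta_{n_k}\to\theta^*\in\overline{\Theta}$. The defining property of the MLE yields $L_{n_k}(\hat\theta_{n_k}) \geq L_{n_k}(\theta_0)$ for every $k$. The strong law, applied to each of the finitely many indicators $\mathbf{1}\{X_i = x\}$, says the empirical distribution $\hat p_n(x) = n^{-1}\sum_{i=1}^n \mathbf{1}\{X_i = x\}$ converges almost surely to $\mathcal{P}_{\theta_0}(x)$. Hence $L_{n_k}(\theta_0) \to \ell(\theta_0)$ almost surely, while the continuity hypothesis combined with $\hat\theta_{n_k}\to\theta^*$ gives $\mathcal{P}_{\hat\theta_{n_k}}(x)\to\mathcal{P}_{\theta^*}(x)$ for each $x$. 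Passing to the limit in $L_{n_k}(\hat\theta_{n_k}) = \sum_x \hat p_{n_k}(x)\log\mathcal{P}_{\hat\theta_{n_k}}(x)$ then yields $\ell(\theta^*) \geq \ell(\theta_0)$, whence $\theta^* = \theta_0$ by the previous paragraph. Since every subsequence of $\{\hat\theta_n\}$ admits a further subsequence converging to $\theta_0$, we conclude $\hat\theta_n \cvgas \theta_0$.

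The hard part will be precisely that last passage to the limit, because $\log\mathcal{P}_\theta(x)$ can blow up to $-\infty$ on the boundary of the probability simplex and a naive dominated-convergence argument is unavailable. Two observations, both exploiting finiteness of $\mathcal{X}$, rescue the step. First, if $\mathcal{P}_{\theta_0}(x) > 0$ then $\hat p_{n_k}(x)$ converges to a strictly positive limit and the term has a well-defined extended-real limit $\mathcal{P}_{\theta_0}(x)\log\mathcal{P}_{\theta^*}(x)$; should this equal $-\infty$ for some such $x$, then $\ell(\theta^*) = -\infty < \ell(\theta_0)$, in direct contradiction with the MLE inequality holding in the limit. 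Second, if $\mathcal{P}_{\theta_0}(x) = 0$, then $\hat p_{n_k}(x) = 0$ almost surely for every $k$, so those terms contribute nothing regardless of how $\mathcal{P}_{\hat\theta_{n_k}}(x)$ behaves. Together these facts convert the apparently delicate limit into the standard upper-semicontinuity estimate that closes the Wald argument and delivers the stated almost-sure consistency.
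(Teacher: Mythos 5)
Your proof is correct. The paper does not supply its own argument for this lemma---it is imported verbatim from Chang (1996), where only a sketch via Wald's method is indicated---and your write-up is a correct, self-contained instantiation of exactly that Wald-type approach (unique maximiser of the expected log-likelihood via Gibbs' inequality, pointwise subsequence extraction on the almost-sure event where the strong law holds for each of the finitely many cells of $\mathcal{X}$, and the observation that finiteness of $\mathcal{X}$ tames the $-\infty$ boundary terms), so there is nothing to add.
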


\begin{proof}[Proof of Theorem~\ref{consistency}]
The proof of Theorem~5.1 in \citet{chang:1996:00} carries to this context with only slight modification. We need to show that the conditions of Lemma~\ref{changwald} are implied by the assumptions of Theorem~\ref{consistency} with $\mathcal{X}=\{A,C,G,T\}^{|T|}$ and $\mathcal{P}_\theta(A)=\mathbb{P}_\theta(X_T\in A)$ for $A\subseteq\mathcal{X}$.

For $\theta,\theta_0\in\Theta$, Theorem~\ref{fullrecon} gives that $\mathcal{P}_\theta\neq\mathcal{P}_{\theta_0}$ if $\theta\neq\theta_0$. It remains to show for $\theta_0\in\Theta$ and $\theta\in\overline{\Theta}$ that $\mathcal{P}_\theta\neq\mathcal{P}_{\theta_0}$ if $\theta\neq\theta_0$.

The proof of Theorem~5.1 in \citet{chang:1996:00} shows that the assumptions that $P^{uv}$ is invertible and not a permutation matrix for every edge $\{u,v\}$ must be satisfied for a model $\theta\in\overline{\Theta}$ if $\mathcal{P}_\theta=\mathcal{P}_{\theta_0}$ for some $\theta_0\in\Theta$, by virtue of the form of $\mathcal{P}_{\theta_0}$. The set of strand-symmetric matrices is closed, so the assumption that each such $P^{uv}$ is strand-symmetric must be satisfied for all $\theta\in\overline{\Theta}$. Finally, if compositional asymmetry is violated at any internal node for a $\theta\in\overline{\Theta}$, then for at least one pair of terminal nodes $a$ and $b$, the eigendecomposition of $S\left(J^{ab}\right)^{-1}SJ^{ab}$ will have at least one repeated eigenvalue. That is, if $\mathcal{P}_\theta=\mathcal{P}_{\theta_0}$ for some $\theta_0\in\Theta$, then for $\theta$ we must have that all internal nodes satisfy compositional asymmetry.

So far we have shown that if $\mathcal{P}_\theta=\mathcal{P}_{\theta_0}$ for $\theta_0\in\Theta$ and $\theta\in\overline{\Theta}$ then the assumptions of Corollary~\ref{rootedtoporecon} are satisfied by both models and so they must have the same rooted topology.

It remains to show that if $\mathcal{P}_\theta=\mathcal{P}_{\theta_0}$ for $\theta_0\in\Theta$ and $\theta\in\overline{\Theta}$ then the transition probability matrices for $\theta_0$ are the same as for $\theta$. We will show that the inductive steps of Theorem~\ref{fullrecon} still apply in this context. As the assumptions of Corollary~\ref{rootedtoporecon} are satisfied for $\theta$, Lemma~\ref{cherry} can be safely applied, so the only remaining question is the ordering of rows of the transition probability matrices.

Take a transition probability matrix $P$ from the model $\theta\in\overline{\Theta}$ and the matrix $P_0$ on the corresponding edge from $\theta_0\in\Theta$. Firstly assume $P_0\in\mathcal{M}$. For some permutation matrix $R$, $P=RP_0$, but if $R\neq I$, $RP_0\notin\overline{\mathcal{M}}$ as $\mathcal{M}$ is strongly reconstructible from rows. We must also check that $RP_0\notin\overline{\mathcal{S}_\epsilon(\mathcal{M})}$. By construction, $\overline{\mathcal{S}_\epsilon(\mathcal{M})}\subseteq\mathcal{S}(\mathcal{M})$ and $RP_0\notin\mathcal{S}(\mathcal{M})$, so $RP_0\notin\overline{\mathcal{S}_\epsilon(\mathcal{M})}$ and $P=P_0$. Next assume that $P_0\in\mathcal{S}_\epsilon(\mathcal{M})$. Again $P=RP_0$, but $RP_0\notin\overline{\mathcal{M}}$. In this case the order of rows of $P$ is decided by the order of those on neighbouring edges, and must be the same for $\theta$ and $\theta_0$.

It is therefore possible to modify the proof of Theorem~\ref{fullrecon} such that it ensures that for $\theta_0\in\Theta$ and $\theta\in\overline{\Theta}$, $\mathcal{P}_\theta\neq\mathcal{P}_{\theta_0}$ if $\theta\neq\theta_0$

We are therefore able to apply Lemma~\ref{changwald} and the model can be estimated consistently under maximum likelihood.
\end{proof}

\begin{proof}[Proof of Theorem~\ref{continuousconsistency}]
This theorem follows from the proofs of Theorems \ref{continuousfullrecon} and \ref{consistency}. The only difficulty is to check whether there might exist a $Q$ which is part of the model $\theta$ and a corresponding $Q_0$ which is a part of the model $\theta_0$ such that $\mathcal{P}_\theta=\mathcal{P}_{\theta_0}$ but $Q\neq Q_0$. This could only occur if the transition probability matrix $P$ from the same edge and model $\theta$ mapped to multiple valid Markov generators. As for the proof of Theorem~\ref{consistency}, all transition probability matrices are uniquely determined by $\mathcal{P}_\theta$ and as $\mathcal{P}_\theta=\mathcal{P}_{\theta_0}$ we must have that $P=P_0$ for the corresponding $P_0$ in $\theta_0$, so by assumption $Q$ must be unique and $Q=Q_0$.
\end{proof}

\section{Discussion}\label{conculding_remarks}

Assumption~\ref{freig} of Corollary~\ref{toporecon} implies that the process under consideration is non-stationary and we subsequently find that we can recover the root. We further show in example~\ref{statex} that in the stationary, non-reversible case it is not possible to recover the root. This seems to fit with the collective empirical evidence of \citet{yang1995use}, \citet{huelsenbeck2002inferring}, and \citet{yap2005rooting}. That is, in some cases a non-stationary process can recover the root, whereas a stationary, non-reversible process cannot. Assumption~\ref{freig} is also the source of a subtle and interesting contradiction. It can be shown under mild assumptions that the product of arbitrary stochastic matrices will eventually have almost identical rows as the number of terms in the product increases \citep[][Section~4.3]{seneta2006non}. It is not difficult to show that the rows must be the stationary distribution of the product, and that if the terms are strand symmetric then the product must be strand symmetric. So for any reasonably long history of strand-symmetric processes, we must have that the marginal probabilities are also strand-symmetric. The conclusion is that for our model to work, the process should probably have been strand asymmetric prior to the root of the tree, and strand symmetric thereafter. An alternative interpretation is that our model is just a model, and that we could start by assuming that the root probabilities are sufficiently asymmetric for the process to be non-stationary, and then that the process is sufficiently strand-symmetric for the model to be approximately correct. In either case assumption~\ref{freig} of Corollary~\ref{toporecon} could be an interesting tool for probing the limits of inference of this model.

In the spirit of \citet{chang:1996:00} we have attempted to provide a constructive proof, and hope that the results here presented will enable the implementation of new phylogenetic methods. There are several options for such an implementation. \citet{mossel2006learning} provide an algorithm for directly applying the concepts in \citet{chang:1996:00} to learn the full model parameters using spectral methods. It would be relatively straightforward to adapt that approach to the proof to Theorem~\ref{fullrecon}. Alternatively, progress is being made towards fitting models that are {\it heterogeneous across lineages} \citep{jayaswal2011reducing,jayaswal2014mixture}. These methods could be adapted to our setting. Also, the method of \citet{yap2005rooting} where a homogeneous process is fitted to the whole tree would be trivial to implement for a strand symmetric process, but could be done with a solid theoretical basis and new insight into the limits of inference of such a model.

In addition to these immediate applications, this work provides a foundation for theoretical developments where a non-stationary model can be fully recovered for a two-taxon rooted topology. The ideas presented here could be extended to any such model, not just the strand-symmetric one on which we focus. Another possible direction for future theoretical development is a model that is rate-heterogeneous amongst alignment columns. We have ignored this possibility, both because it is possible to work around it with careful site classification, as in \citet{yap2005rooting}, and because recent results in non-stationary phylogenetic processes have shown that the general time-reversible model is less biased than the usual rate-heterogeneous general time-reversible model in comparison to a general non-stationary process in some circumstances \citep{kaehler2015genetic}. Another possible line of enquiry is to refine the sufficient conditions of Theorem~\ref{fullrecon}. We know that a stationary process is not identifiable for a rooted topology, but that our non-stationary strand-symmetric process is. Our class of models is slightly narrower than the class of non-stationary, strand-symmetric processes, however, so it may be possible to broaden our sufficient assumptions.

\subsection*{Acknowledgements}
I am grateful to Teresa Neeman, Gavin Huttley, Von Bing Yap, Michael Roper, and John Trueman for their helpful comments. I am particularly indebted to Teresa for suggesting Corollary~\ref{rootedtoporecon}, which greatly simplified the proof of Theorem~\ref{fullrecon}.

This work was supported by the National Health and Medical Research Council [grant number APP1085372].

\bibliographystyle{chicagoa}
\bibliography{refs}
\end{document}